\newlength{\widebarargwidth}
\newlength{\widebarargheight}
\newlength{\widebarargdepth}
\newtheorem{theorem}{Theorem}[section]
\newtheorem{lemma}[theorem]{Lemma}
\newtheorem{corollary}[theorem]{Corollary}
\newenvironment{definition}[1][Definition]{\begin{trivlist}
\item[\hskip \labelsep {\bfseries #1}]}{\end{trivlist}}
\begin{document}
\title{Economic information from Smart Meter: Nexus Between Demand Profile and Electricity Retail Price}

\author{Yang~Yu,~\IEEEmembership{Student Member,~IEEE,}
        Guangyi~Liu,~\IEEEmembership{Senior Member,~IEEE,}
        Wendong~Zhu,
        Fei~Wang,
        Bin~Shu,
        Kai~Zhang,
        Ram~Rajagopal,~\IEEEmembership{Member,~IEEE,}
        and~Nicolas~Astier.
\thanks{Yang Yu is with Precourt Energy Efficiency Center and Department of Management Science and Engineering, Stanford University, CA, 94305,yangyu1@stanford.edu.}
\thanks{Guangyi Liu, Wendong Zhu, and Fei Wang are with GEIRI North America, 5451 Great America Parkway, Santa Clara, CA 95054, guangyi.liu@geirina.net.}
\thanks{Bin Shu and Kai Wang are with Beijing Electric power Economic  Technology Research Institute, Beijing, China, 100055.}
\thanks{Ram Rajagopal is with Department of Civil and Environmental Engineering and Department of Electronic Engineering, Stanford University, CA, 94305,ramr@stanford.edu.}
\thanks{Nicolas Astier is with Toulouse School of Economics, Toulouse, France, 31500,nicolas.astier@ut-capitole.fr.}
}

\maketitle

\begin{abstract}
In this paper, we demonstrate that a consumer's marginal system impact is only determined by their demand profile rather than their demand level. Demand profile clustering is identical to cluster consumers according to their marginal impacts on system costs. A profile-based uniform-rate price is economically efficient as real-time pricing. We develop a criteria system to evaluate the economic efficiency of an implemented retail price scheme in a distribution system by comparing profile clustering and daily-average clustering. Our criteria system can examine the extent of a retail price scheme's inefficiency even without information about the distribution system's daily cost structure. We analyze data from a real distribution system in China. In this system, targeting each consumer's high-impact days is more efficient than target high-impact consumers.
\end{abstract}

\begin{IEEEkeywords}
Demand profile, marginal system impacts, retail price, clustering, distribution system
\end{IEEEkeywords}


\section{Introduction}
Consumers' daily demand profiles fundamentally links with the cost of serving them in a distribution network system (DS)\cite{strbac2008demand}. A consumer's daily demand profile is their hourly electricity demand across a day, measured from a smart meter. The aggregated demand profile in a DS determines the DS's daily system cost and associated emissions\cite{madaeni2013using}. Most current retail price designs are based on demand levels rather than demand profiles. For example, a retail electricity bill in most current U.S. markets includes two parts: a tiered price for electricity energy consumption, and a fixed charge to balance the utility's budget for fixed costs and other service costs\cite{puller2013efficient}. Both parts are independent of a consumer's demand profile.

The lack of information about individual users' electricity demand patterns limits retail price designs and demand response project targets. Before smart meters collect individual consumers' demand profiles, retail-price designs and demand response projects are designed on the basis of two assumptions\cite{joskow2007reliability}. One is that all consumers have very similar electricity usage patterns. The other is that the supply cost curve is convex. These two assumptions lead to the conclusion that the marginal system cost of serving a consumer monotonically increases with their demand level. For example, the tiered-price scheme charges consumers whose monthly total demand levels exceed a given threshold at a higher rate than other users. The demand charge also prices consumers only according to their peak demand levels in a month no matter when their peak demands occur. These "demand-level based" policies are efficient only if the assumption that consumers' usage patterns are homogeneous is true. Unfortunately, this assumption is rejected by the empirical data analyses on smart meter data.

Researchers have found remarkable heterogeneity in consumers' daily demand profiles in transmission systems \cite{wang2015load}. \cite{kwac2014household} demonstrates that annual consumers' demand profiles in PG\&E can be clustered into more than 200 types. The heterogeneity in consumers' daily-demand profiles suggests that we should analyze the difference in marginal costs in serving various consumers, examine the relation between consumers' marginal system impacts and their demand profiles and levels, and reconsider our current retail-pricing scheme. 

Consumer daily demand profiles provide our information about features impacting consumption behaviors, such as temperature\cite{albert2013smart}\cite{albert2013building}\cite{kwac2013utility}\cite{beckel2014revealing}\cite{wijaya2014consumer}. Demand profiles are also used for improving the accuracy of demand forecasts\cite{quilumba2015using}\cite{rhodes2014clustering}\cite{gulbinas2015segmentation}\cite{kouzelis2015estimation}. A lot of studies try to target consumers for demand management on the basis of demand profiles \cite{patel2014aggregation}\cite{kavousian2015ranking}\cite{borgeson2013targeted}\cite{verma2015data}.

However, we still lack an economic explanation for consumers' heterogeneous but clustered demand profiles. Consequently, it is unclear how the profile clustering results can be used for distribution system market designs and demand response project developments. Two core questions must be answered before consumer profile information can be fully used to design retail contracts and demand response projects. How do a consumer's demand level and profile impact the distribution system's cost of serving this consumer? What is the relation between consumers' profile clustering and their impacts on a distribution system's costs?

Understanding the relation between consumer demand profile and the system cost is the foundation for the framework to analyze the profitability of serving a consumer, assess the efficiency of the retail price design or demand response project, and examine the quality of demand-response resources in a DS. In this paper, we explain the economic nexus between consumer's demand profile, marginal system cost, and economically optimal daily average retail rate. This economic explanation is also used to develop criteria to assess the economic inefficiency and association impacts of an implemented retail price scheme in a DS. The rest of the paper is organized as follow: Sec~\ref{sec:marginalimpact} includes the theoretical discussion about consumer demand profile and marginal system impacts; we explain the nexus between demand profile clustering and daily-average-rate clustering in Sec~\ref{sec:twoclustering}; then, we develop criteria to assess a retail-price design in Sec~\ref{sec:criteriaforprice}; Sec~\ref{sec:empirical} summarize our empirical analysis on a China's DS; Sec~\ref{sec:conclusion} includes a discussion about the empirical studies and concludes the whole paper.

\section{Load profile clustering, marginal system impact and pricing design}\label{sec:marginalimpact}

\subsection{A consumer's marginal system impacts}

We examine a DS that serves $N$ consumers whose hourly demands are positive. Consumer $i$'s daily demand is denoted by  $\overrightarrow{L_i} = (l_{i1}, ..., l_{ih}, ...l_{i24})_{h=1}^{24} \in R^{24}$, where  $l_{ih}$ is consumer $i$'s hourly demand in hour $h$.

The summation of all consumers' demands is the DS's daily system aggregated demand $\overrightarrow{L} = \sum_{i=1}^N \overrightarrow{L_i}$, which determines the DS's daily system cost and emissions. The daily cost $C(\overrightarrow{L} )$, as well as emissions, is determined by not only the total amount of electricity demand but also by various features of $\overrightarrow{L}$'s profile, including peak level, peak time and ranges of ramps. For example, $C(\overrightarrow{L})$ is high when $\overrightarrow{L}$ has a big evening ramp, such as California's "duck-curve" concerns. We assume $\overrightarrow{L}$ has $M$ features that impact the system cost and use $\phi_j$ to represent the $j$th feature. The feature $\phi_j$ is a function of $\overrightarrow{L}$. In this section, we examine the marginal impact of a consumer's daily demand on the daily cost $C(\overrightarrow{L} )$ through the impact on $\overrightarrow{L}$'s profile features.

$\overrightarrow{L}$ and its features change when consumer $i$ increases their demand and keeps their daily demand profile the same, which means that this additional electricity demand is proportionally allocated into $24$ hours according to $i$'s daily demand profile. $\overrightarrow{L}$' features determine the daily system cost. Thus, Consumer $i$'s profile-keeping demand growth will change the DS's daily system cost. The marginal system daily cost caused by consumer $i$'s daily electricity demand is the derivative of the system daily cost along $\overrightarrow{L_i}$. 

We define $i$'s marginal system feature impacts (MFIs) and marginal system cost impact (MCI) as the marginal changes of $\overrightarrow{L}$'s profile features and system costs corresponding to $\overrightarrow{L_i}$'s profile-keeping demand growth.
\begin{definition}\label{def:sensitivity}
Consumer $i$'s marginal system feature impact on $\phi_j$ is
\begin{align}
&MFI_{i,j} = \nabla_{\overrightarrow{L_i}} \phi(\overrightarrow{L}) \nonumber \\
&= \lim_{\Delta \rightarrow 0} \frac{\phi_j(\overrightarrow{L}+\Delta\frac{\overrightarrow{L_i}}{\parallel \overrightarrow{L_i} \parallel_1}) - \phi_j(\overrightarrow{L})}{\Delta} \label{ref:margphi}.
\end{align}
Here, $\parallel\bullet\parallel_1$ is the $l_1$ norm of $R^{24}$. Correspondingly, the marginal system-cost impact of consumer $i$'s daily demand is
\begin{align}
MCI_i &= \nabla_{\overrightarrow{L_i}} C(\overrightarrow{L})\nonumber \\
&=\sum_{j=1}^M (\frac{\partial C}{\partial \phi_j} MFI_{i,j}). \label{ref:margico}
\end{align}
\end{definition}
Here, $\frac{\partial C}{\partial \phi_j}$ is feature $\phi_j$'s marginal system-cost impact.

MFIs and MCI are sensitive to the demand profile rather than demand level. We refer two daily demands $\overrightarrow{L_i}$ and $\overrightarrow{L_k}$ share the same demand profile if they are linearly correlated. If two consumers' daily demands share the same demand profile, they have the same MFIs and MCI even if  they consume different amounts of electricity.
\begin{theorem}\label{the:clustersystemimpact}
If $\overrightarrow{L_i}$ and $\overrightarrow{L_k}$ are linearly correlated, consumers $i$ and $k$ share the same MFIs and MCI.
\end{theorem}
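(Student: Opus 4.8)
The plan is to observe that $MFI_{i,j}$ and $MFI_{k,j}$ are directional derivatives of the \emph{same} function $\phi_j$ at the \emph{same} base point $\overrightarrow{L}=\sum_{m=1}^N \overrightarrow{L_m}$, and that the only thing distinguishing them is the unit direction along which consumer $i$'s (resp.\ $k$'s) demand grows profile-keeping. Since ``linearly correlated'' together with the standing assumption that hourly demands are positive means $\overrightarrow{L_i}=\lambda\,\overrightarrow{L_k}$ for some scalar $\lambda>0$, I will show these two directions coincide, whence equality of the MFIs, and then of the MCI, is immediate from Definition~\ref{def:sensitivity}.

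First I would record the normalization identity. Because $\lambda>0$, $\parallel\overrightarrow{L_i}\parallel_1=\parallel\lambda\overrightarrow{L_k}\parallel_1=\lambda\parallel\overrightarrow{L_k}\parallel_1$, so
\begin{align}
\frac{\overrightarrow{L_i}}{\parallel\overrightarrow{L_i}\parallel_1}=\frac{\lambda\,\overrightarrow{L_k}}{\lambda\parallel\overrightarrow{L_k}\parallel_1}=\frac{\overrightarrow{L_k}}{\parallel\overrightarrow{L_k}\parallel_1}.
\end{align}
Thus the $l_1$-normalized perturbation of the aggregate load $\overrightarrow{L}$ is literally the same vector whether we grow consumer $i$'s or consumer $k$'s demand. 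Writing $u=\overrightarrow{L_i}/\parallel\overrightarrow{L_i}\parallel_1=\overrightarrow{L_k}/\parallel\overrightarrow{L_k}\parallel_1$, the defining limit \eqref{ref:margphi} becomes
\begin{align}
MFI_{i,j}=\lim_{\Delta\to 0}\frac{\phi_j(\overrightarrow{L}+\Delta u)-\phi_j(\overrightarrow{L})}{\Delta}=MFI_{k,j}
\end{align}
for every $j=1,\dots,M$, since the right-hand side no longer references $i$ or $k$. Substituting $MFI_{i,j}=MFI_{k,j}$ into \eqref{ref:margico} then gives $MCI_i=\sum_{j=1}^M \frac{\partial C}{\partial\phi_j}MFI_{i,j}=\sum_{j=1}^M \frac{\partial C}{\partial\phi_j}MFI_{k,j}=MCI_k$.

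As for the main obstacle: there is essentially no computational difficulty here --- the entire content of the theorem lives in the modeling choice to normalize the perturbation direction by the $l_1$ norm, which is precisely what makes the MFIs and MCI depend on demand profile rather than level. The one point deserving a line of care is justifying $\lambda>0$ rather than merely $\lambda\neq 0$: this is exactly where the hypothesis that all hourly demands are strictly positive is used, ruling out $\lambda<0$ and ensuring $\parallel\overrightarrow{L_i}\parallel_1=\lambda\parallel\overrightarrow{L_k}\parallel_1$ (not $|\lambda|\parallel\overrightarrow{L_k}\parallel_1$). One should also note explicitly that the base point $\overrightarrow{L}$ is common to both consumers, so nothing else inside the limit changes when we replace $i$ by $k$.
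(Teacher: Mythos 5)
Your proposal is correct and follows essentially the same route as the paper's own proof: both reduce the claim to the observation that the $l_1$-normalized perturbation directions $\overrightarrow{L_i}/\parallel\overrightarrow{L_i}\parallel_1$ and $\overrightarrow{L_k}/\parallel\overrightarrow{L_k}\parallel_1$ coincide when the two demand vectors are positive scalar multiples of one another, so the defining limits for the MFIs agree and the MCI equality follows from Eq.~\eqref{ref:margico}. Your explicit remark on why the scalar must be positive (using positivity of hourly demands) is a small extra care the paper leaves implicit, but it does not change the argument.
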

\begin{proof}
Assume consumers $i$ and $k$ have the same demand profile. Their daily demands are linearly correlated such that $\alpha\overrightarrow{L_i} = \overrightarrow{L_k}$, where $\alpha$ is a positive constant. $\forall j$,
\begin{align}
 MFI_{i,j} &= \lim_{\Delta \rightarrow 0} \frac{\phi_j(\overrightarrow{L}+\Delta\frac{\overrightarrow{L_i}}{\parallel \overrightarrow{L_i} \parallel_1})-\phi_j(\overrightarrow{L})}{\Delta}.\nonumber \\
& = \lim_{\Delta \rightarrow 0} \frac{\phi_j(\overrightarrow{L}+\Delta\frac{\alpha\overrightarrow{L_i}}{\parallel \alpha \overrightarrow{L_i} \parallel_1})-\phi_j(\overrightarrow{L})}{\Delta}.\nonumber \\
& = \lim_{\Delta \rightarrow 0} \frac{\phi_j(\overrightarrow{L}+\Delta\frac{\overrightarrow{L_k}}{\parallel \overrightarrow{L_k} \parallel_1})-\phi_j(\overrightarrow{L})}{\Delta}.\nonumber \\
& =  MFI_{k,j}.\nonumber
\end{align}
Because $MFI_{i,j} = MFI_{i,k}$ for all $j$, we conclude that $MCI_i = MCI_k$.
\end{proof}

Theorem~\ref{the:clustersystemimpact} is the economic explanation for consumer profile clustering.

\begin{lemma}\label{lem:explaprofileclustering}
Demand profile clustering actually clusters consumers according to their MFIs and MCI. Consumers in the same profile cluster share the same MFIs and MCI.
\end{lemma}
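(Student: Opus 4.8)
The plan is to obtain Lemma~\ref{lem:explaprofileclustering} as an essentially immediate corollary of Theorem~\ref{the:clustersystemimpact}, once we are precise about what ``demand profile clustering'' means. First I would recall the definition of the demand profile equivalence used just before the theorem: two daily demands $\overrightarrow{L_i}$ and $\overrightarrow{L_k}$ have the same profile exactly when they are linearly correlated, i.e.\ $\overrightarrow{L_k} = \alpha \overrightarrow{L_i}$ for some $\alpha > 0$; equivalently, after normalizing by the $l_1$ norm, $\overrightarrow{L_i}/\|\overrightarrow{L_i}\|_1 = \overrightarrow{L_k}/\|\overrightarrow{L_k}\|_1$. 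A profile clustering is any partition of the consumers into groups on which this normalized profile is (approximately, or exactly, in the idealized statement) constant; so membership in a common cluster means precisely that the consumers' demands are pairwise linearly correlated.

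Next I would apply Theorem~\ref{the:clustersystemimpact} directly: for any two consumers $i,k$ placed in the same profile cluster, linear correlation of $\overrightarrow{L_i}$ and $\overrightarrow{L_k}$ holds by the definition of the clustering, and the theorem then yields $MFI_{i,j} = MFI_{k,j}$ for every feature index $j$ and hence $MCI_i = MCI_k$. This already gives the second sentence of the lemma (``consumers in the same profile cluster share the same MFIs and MCI''). For the first sentence, I would phrase the equivalence in the reverse direction: since the MFIs and MCI are functions only of the normalized profile $\overrightarrow{L_i}/\|\overrightarrow{L_i}\|_1$ (which is exactly the content of Definition~\ref{def:sensitivity} combined with Theorem~\ref{the:clustersystemimpact}), grouping consumers by profile induces a well-defined grouping by the pair $(MFI_{i,\cdot}, MCI_i)$, and conversely consumers with the same profile necessarily land in the same MFI/MCI group. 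Thus the profile partition refines into, and is consistent with, the partition by marginal system impacts, which is the sense in which profile clustering ``is'' clustering by MFIs and MCI.

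I would close by noting the only subtlety worth a sentence: the statement is exact for the idealized clustering in which each cluster consists of consumers with \emph{exactly} proportional demand vectors; for an empirical clustering that only approximately equalizes normalized profiles, the conclusion should be read as ``approximately equal MFIs and MCI,'' with the gap controlled by the within-cluster variation of $\overrightarrow{L_i}/\|\overrightarrow{L_i}\|_1$ and the (assumed) continuity of each $\phi_j$ and of $C$. The main obstacle is therefore not mathematical depth but making the informal phrase ``demand profile clustering actually clusters consumers according to their MFIs and MCI'' into a precise statement about partitions; once that is pinned down, the proof is a one-line invocation of Theorem~\ref{the:clustersystemimpact}.
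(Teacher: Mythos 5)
Your proposal is correct and follows the same route the paper intends: the lemma is presented there as an immediate consequence of Theorem~\ref{the:clustersystemimpact} (the paper gives no separate proof, only the remark that the theorem is the economic explanation of profile clustering), and your argument is exactly that one-line invocation, with a sensible extra remark on approximate empirical clusters that the paper omits.
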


\subsection{Consumer demand profile and economic retail price}
Consumer MCI is fundamentally correlated with the economically optimal retail price. A retail price design that charging consumers according to their MCIs is economically optimal. If consumers pay their daily bill by a rate equal to $MCI_i$, we refer this price scheme as "\textbf{profile price menu}" (PPM). PPM actually proposes consumers a menu of pairs (profile, daily flat price) and lets consumers select demand profile from the menu. PPM is equivalent to charging consumers by real-time hourly prices (RTPs), in two ways. First, PPM gives each consumer the same economic incentive as the RTPs. Second, PPM and RTPs lead to the same market equilibrium in the DS, which means consumer demands and system costs and emissions are the same. In the following theorem, we demonstrate the foundation 
\begin{theorem}\label{the:ppisoptimal}
Charging consumers according to the PPM provides every consumer the same incentive in all hours and consequently leads to the same market equilibrium as charging consumers the RTPs in a DS system. Therefore, PPM is also the economically efficient retail price.
\end{theorem}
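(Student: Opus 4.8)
The plan is to reduce the theorem to a single algebraic identity: under the profile price menu (PPM), consumer $i$'s daily bill, regarded as a function of the whole demand vector $\overrightarrow{L_i}$, coincides \emph{identically} with the real-time-pricing bill $\sum_{h=1}^{24} p_h l_{ih}$, where $p_h$ is the marginal system cost in hour $h$, i.e.\ the real-time price. Once that is established, everything the theorem asserts is immediate: the two schemes confront each consumer with literally the same net-payoff problem, hence induce the same best responses, hence the same aggregate $\overrightarrow{L}$, the same $C(\overrightarrow{L})$ and emissions --- the same market equilibrium --- and, since marginal-cost (real-time) pricing is the textbook efficiency benchmark, the PPM is efficient too. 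The ``same incentive in all hours'' clause is then just the observation that the marginal payment for hour-$h$ consumption is $p_h$ under either scheme.

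First I would fix the model so the statement is well posed. I treat each consumer $i$ as a price taker choosing $\overrightarrow{L_i}$ --- equivalently a profile $\hat{u}_i = \overrightarrow{L_i}/\|\overrightarrow{L_i}\|_1$ (so $\|\hat{u}_i\|_1 = 1$) and a level $t_i = \|\overrightarrow{L_i}\|_1 \ge 0$ --- to maximize utility minus payments, with the hourly real-time prices $p_h = \partial C/\partial L_h$ evaluated at the prevailing aggregate. The PPM offers, for each profile $\hat{u}$, the flat daily per-unit rate $r(\hat{u}) := MCI(\hat{u})$; Theorem~\ref{the:clustersystemimpact} is precisely what makes $MCI(\hat{u})$ a legitimate function of the profile alone, so the menu is well defined, and choosing from it amounts to choosing $\hat{u}_i$ and $t_i$ freely.

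The core step is a chain-rule computation on Definition~\ref{def:sensitivity}. The directional derivative defining $MFI_{i,j}$ equals $\sum_h (\partial\phi_j/\partial L_h)\, l_{ih}/\|\overrightarrow{L_i}\|_1$, so substituting into $MCI_i = \sum_j (\partial C/\partial\phi_j)\, MFI_{i,j}$ and collecting the $j$-sum into $\partial C/\partial L_h := \sum_j (\partial C/\partial\phi_j)(\partial\phi_j/\partial L_h)$, the marginal system cost in hour $h$ (which is the real-time price $p_h$), gives $MCI_i = \sum_h (\partial C/\partial L_h)\, l_{ih}/\|\overrightarrow{L_i}\|_1 = \sum_h p_h\, l_{ih}/\|\overrightarrow{L_i}\|_1$, i.e.\ the Euler-type relation $MCI_i \cdot \|\overrightarrow{L_i}\|_1 = \sum_h p_h l_{ih}$. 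Since a consumer picking profile $\hat{u}_i$ and level $t_i$ buys total daily energy $\|\overrightarrow{L_i}\|_1 = t_i$, the PPM daily bill $r(\hat{u}_i)\, t_i = MCI(\hat{u}_i)\, \|\overrightarrow{L_i}\|_1$ equals $\sum_h p_h l_{ih}$ for \emph{every} admissible $\overrightarrow{L_i}$, not merely at equilibrium: viewed as a function of the demand vector, the PPM bill \emph{is} the real-time bill, even though the flat per-unit rate $MCI(\hat{u}_i)$ moves as the consumer re-optimizes the profile. Hence the marginal payment for hour-$h$ consumption is $\partial(\sum_{h'} p_{h'} l_{ih'})/\partial l_{ih} = p_h$ under either scheme, which is the uniform hourly incentive; and since $(\hat{u}_i, t_i)$ sweeps out all admissible demand vectors and the two net objectives agree, the best responses --- hence the market equilibria --- coincide, with efficiency of the PPM inherited from that of real-time pricing.

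I expect the real obstacle to be formalization rather than algebra: pinning down what ``market equilibrium'' means here (a supply side, price-taking consumers, market clearing) and, in particular, the mild circularity that the menu $r(\cdot)$ is built from the very prices $p_h$ that the equilibrium pins down --- so the precise statement is the fixed-point assertion ``every real-time-pricing equilibrium is a PPM equilibrium and vice versa'' rather than a constructive one. I would also state the regularity hypotheses that the computation silently uses: differentiability of $C$ and of the features $\phi_j$, so the gradients and the chain rule are legitimate, and strict positivity of hourly demands, so that $\|\overrightarrow{L_i}\|_1 = \sum_h l_{ih}$ and the normalization is well defined; if one wants a unique equilibrium rather than merely coinciding equilibrium sets, one adds convexity of $C$ and concavity of the consumers' utilities.
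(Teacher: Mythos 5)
Your proposal is correct, and its skeleton matches the paper's: establish that $MCI_i$ is the demand-weighted average of the hourly prices, show the consumer's optimization problem under PPM and under RTP agree, and conclude the equilibria (hence efficiency) coincide for price-taking consumers. The differences are in how the middle is executed, and yours is a mild streamlining. The paper obtains the weighted-average formula (its Eq.~\eqref{equ:MCIandRTP}) by positing $C(\overrightarrow{L})=\overrightarrow{\Lambda}\cdot\overrightarrow{L}^\intercal$ with the RTPs as coefficients, whereas you derive it from Definition~\ref{def:sensitivity} by the chain rule through the features $\phi_j$, identifying $\lambda_h$ with $\partial C/\partial L_h$; that is more faithful to the MFI/MCI definitions and does not require the daily cost to be literally linear in $\overrightarrow{L}$. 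More substantively, the paper verifies that the \emph{first-order conditions} coincide by differentiating $MCI_i\times\sum_m l_{i,m}$ and checking Term A $+$ Term B $=\lambda_h$ in Eq.~\eqref{equ:PPfirstordercondition}; you instead observe the stronger and simpler fact that the PPM bill $MCI_i\cdot\lVert\overrightarrow{L_i}\rVert_1$ equals the RTP bill $\sum_h\lambda_h l_{ih}$ \emph{identically} as functions of the demand vector, so the two objectives are the same function and the FOC comparison (the paper's Step 2 algebra) becomes a trivial corollary. You also flag explicitly the regularity assumptions (differentiability of $C$ and $\phi_j$, positivity of demands) and the fixed-point nature of the equilibrium claim ($\lambda_h$ depends on the aggregate that the menu itself is built from), which the paper's Step 3 passes over quickly; making the statement ``every RTP equilibrium is a PPM equilibrium and conversely'' is the right way to phrase what is actually proved.
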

\begin{proof}
Our demonstration includes three steps. First, we build the linkage between consumer's MCI and the RTPs. Then, we demonstrate that the PPM and RTP provide the same incentives to every consumer when they decide their hourly demand level. Finally, we demonstrate that the same individual economic incentives lead to the same market equilibrium.

Step 1: We use $\lambda_h$ to represent the wholesale RTP in hour $h$ and $\overrightarrow{\Lambda} = (\lambda_1...\lambda_{24})$ to represent the RTPs in the whole day. The daily system cost is $\overrightarrow{\Lambda} \cdot \overrightarrow{L}^\intercal$. Thus, 
consumer $i$'s MCI
\begin{align}
 MCI_i &= \nabla_{\overrightarrow{L_i}} C(\overrightarrow{L}) = \nabla_{\overrightarrow{L_i}} \overrightarrow{\Lambda} \cdot \overrightarrow{L}^\intercal \nonumber\\
  &= \overrightarrow{\Lambda} \nabla_{\overrightarrow{L_i}} \cdot \overrightarrow{L}^\intercal = \sum_{h=1}^{24} \frac{l_{i,h}}{ \sum_{h=1}^{24} l_{i,h}}\lambda_h. \label{equ:MCIandRTP}
\end{align}
Here, $l_{i,h}$ is consumer $i$'s consumption during hour $h$. Thus, MCI is a weighted average of the RTPs, where the weight for each hour's price is the proportion of the hourly demand to the daily total demand. 

Step 2: We use $U_i(l_{i,1},...,l_{i,24})$ to represent consumer $i$'s utility function. Given the PPM, consumer $i$ pays $MCI_i(l_{i,1},...,l_{i,24})$ for demanding one unit of electricity. $MCI_i(l_{i,1},...,l_{i,24})$ is a function of $l_{i,h}$. Thus, consumer $i$'s hourly demands are solved from the following welfare maximization problem
\begin{align}
 \max_{(l_{i,h},\forall i)} &U_i(l_{i,1},...,l_{i,24}) - MCI_i(l_{i,1},...,l_{i,24}) \times \sum_{m=1}^{24} l_{i,m} \label{equ:consumerppoptimal}
\end{align}
Therefore, the first order conditions are
\begin{align}
 \frac{\partial U_i(l_{i,1},...,l_{i,24})}{\partial l_{i,h}} =\underbrace{ \frac{\partial MCI_i}{\partial l_{i,h}}\times\sum_{m=1}^{24} l_{i,m}}_{A} + \underbrace{MCI_i
 \vphantom{\frac{\partial MCI_i}{\partial l_{i,h}}\times\sum_{m=1}^{24}} 
 }_{B},\forall h. \label{equ:PPfirstordercondition}
\end{align}
The left hand side of Eq.~\eqref{equ:PPfirstordercondition} is consumer $i$'s marginal utility of demanding $l_{i,h}$ in hour $h$, which includes two parts. Term A is the marginal impact of $l_{i,h}$ on all hourly costs by affecting the rate $MCI_i$. Term B is the marginal impact of $l_{i,h}$ on the cost for hour $h$ by varying the demand level. The summation of Term A and B is the RTP for hour $h$.
\begin{align}
 &\frac{\partial MCI_i}{\partial l_{i,h}}\times\sum_{m=1}^{24} l_{i,m}  + MCI_i. \nonumber\\
 =& \frac{\lambda_h\sum_{m \neq h}l_{i,m} - \sum_{m \neq h} \lambda_ml_{i,m}}{(\sum_{m=1}^{24} l_{i,m}) ^2}\times\sum_{m=1}^{24} l_{i,m}   \nonumber\\
 +& \sum_{m=1}^{24} \frac{l_{i,m}}{ \sum_{m=1}^{24} l_{i,m}}\lambda_m = \lambda_h
\end{align}
Therefore, Eq.~\eqref{equ:PPfirstordercondition} is equivalent to
\begin{align}
 \frac{\partial U_i(l_{i,1},...,l_{i,24})}{\partial l_{i,h}} =\lambda_h,\forall h. \label{equ:RTPfirstordercondition}
\end{align}
When the RTP is implemented in the DS, consumer $i$'s welfare maximization problem is
 \begin{align}
 \max_{(l_{i,h},\forall i)} U_i(l_{i,1},...,l_{i,24}) - \sum_{h=1}^{24} \lambda_h l_{i,h} \label{equ:consumerppoptimalRTP}
\end{align}
The first order conditions of Eq.~\eqref{equ:consumerppoptimalRTP} are also Eq.~\eqref{equ:RTPfirstordercondition}. Thus, the PPM and RTP provide the same economic incentives to consumer $i$ in all hours.

Step 3: When all consumers are price takers, the RTP in hour $h$ $\lambda_h(\overrightarrow{L})$ is a function of only the aggregate demand, and is not affected by individual consumer demands. Therefore, the supply cost of the whole DS is $C(\overrightarrow{L}) = \sum_{h=1}^{24} \lambda_h(\overrightarrow{L})\sum_{i=1}{N}$. Consequently, for both PPM and RTP scenarios, the market equilibrium of all hours are solved from
 \begin{align}
 \frac{\partial U_i(l_{i,1},...,l_{i,24})}{\partial l_{i,h}} =\lambda_h(\overrightarrow{L}),\forall i, h. \label{equ:RTPfirstorderconditionmarketeqbuilirium}
\end{align}
Therefore, the PPM and RTPs lead to the same market equilibrium. Because the RTPs are the economically optimal retail price, the PPM is also economically optimal.
\end{proof}

\section{Profile clustering and daily-average rate clustering}\label{sec:twoclustering}
The economic optimality of the PPM scheme indicates the economic essence of consumer demand profile clustering. Theorems~\ref{the:clustersystemimpact} and~\ref{the:ppisoptimal} together indicate that consumers sharing the same demand profile pay the same daily rate under the PPM scheme. Therefore, consumers in the same demand-profile cluster pay the same rate when the PPM is implemented. Because of the economic linkage between the PPM daily rate and RTPs demonstrated in the proof of Theorems~\ref{the:ppisoptimal}, consumers in the same demand-profile cluster also pay the same daily average rate when the RTPs are used.
\begin{corollary}\label{cor:pprtpsamerate}
Consumers in the same demand-profile cluster pay the same daily average rate when the RTPs or PPM is implemented.
\end{corollary}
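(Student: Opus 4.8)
The plan is to chain together the two theorems already proved and the explicit MCI--RTP identity \eqref{equ:MCIandRTP}. First I would fix a demand-profile cluster and two consumers $i$ and $k$ in it, so that $\overrightarrow{L_i}$ and $\overrightarrow{L_k}$ are linearly correlated. By Theorem~\ref{the:clustersystemimpact} they share the same MCI; hence under the PPM scheme, where consumer $i$'s daily rate is by definition $MCI_i$, consumers $i$ and $k$ are charged the same daily flat rate. This disposes of the PPM half of the statement immediately.

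For the RTP half, the key point is that the relevant price vector $\overrightarrow{\Lambda}$ is the same object in both scenarios. By Theorem~\ref{the:ppisoptimal}, and in particular Step~3 of its proof, the PPM and the RTP induce the same market equilibrium, so the aggregate demand $\overrightarrow{L}$ --- and therefore $\lambda_h = \lambda_h(\overrightarrow{L})$ for every $h$ --- coincides under the two schemes, and likewise the individual equilibrium demands $(l_{i,h})_{h}$ coincide. The daily average rate paid by consumer $i$ under RTP is the total bill divided by the total consumption, $\big(\sum_h \lambda_h l_{i,h}\big)\big/\big(\sum_m l_{i,m}\big)$, which by identity \eqref{equ:MCIandRTP} equals $MCI_i$: it is the profile-weighted average of the hourly prices, with weights determined solely by the demand profile. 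Since $i$ and $k$ have the same profile they use the same weights, and since they face the same $\lambda_h$'s they obtain the same weighted average; equivalently, this is just $MCI_i = MCI_k$ from the first half. Hence $i$ and $k$ pay the same daily average rate under RTP as well.

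I expect the only real subtlety to be the appeal to Theorem~\ref{the:ppisoptimal} to guarantee that ``the RTPs'' in the two scenarios are literally the same vector; once equivalence of equilibria is granted, the rest is reading off \eqref{equ:MCIandRTP} and invoking Theorem~\ref{the:clustersystemimpact}. A secondary point worth making explicit is that the ``daily average rate'' is well-defined and depends on the profile only --- i.e. it is invariant under the scaling $\overrightarrow{L_i} \mapsto \alpha \overrightarrow{L_i}$ --- which is again immediate from the weight structure in \eqref{equ:MCIandRTP} and mirrors the MFI/MCI invariance used to prove Theorem~\ref{the:clustersystemimpact}.
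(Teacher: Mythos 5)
Your argument is correct and follows essentially the same route the paper takes: the paper derives this corollary (without a formal proof environment) from Theorem~\ref{the:clustersystemimpact} for the PPM half and from the MCI--RTP identity~\eqref{equ:MCIandRTP} established in the proof of Theorem~\ref{the:ppisoptimal} for the RTP half, exactly as you do. Your added remarks (equality of equilibria from Step~3, and scale-invariance of the daily average rate) are consistent elaborations of the same reasoning rather than a different approach.
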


The conclusion in Corollary~\ref{cor:pprtpsamerate} can be generalized to any economically optimal price scheme in the following theorem, which is the core conclusion of this paper. In this theorem, we conclude that an implemented retail price is economically optimal if and only if every consumer's average daily rate calculated according to the implemented retail price is their MCI. Consequently, if an economically optimal price scheme is implemented,  the partition of consumers according to consumer demand profiles is a refinement of the partition of consumers according to daily average rates. When consumers in two clusterings have different MCIs, the two partitions must be the same.   
\begin{theorem}\label{the:sameprofilesameoptimaldailyrate}
A price scheme is economically optimal if and only if every consumer's daily average rate is their MCI. Therefore, an economically optimal price will lead consumers who share the same demand profile to pay the same daily average rate even if they have different demand levels.
\end{theorem}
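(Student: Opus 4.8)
The plan is to reduce both directions of the equivalence to Theorem~\ref{the:ppisoptimal} and its proof, and then read off the profile–clustering consequence from Theorem~\ref{the:clustersystemimpact}. The single computational fact I will lean on is already recorded in Eq.~\eqref{equ:MCIandRTP}: under price taking the daily system cost is $C(\overrightarrow{L})=\sum_{h=1}^{24}\lambda_h(\overrightarrow{L})\sum_{i=1}^N l_{i,h}$, so consumer $i$'s efficient (RTP) daily payment $\sum_{h=1}^{24}\lambda_h l_{i,h}$ can be rewritten as $MCI_i\times\sum_{m=1}^{24} l_{i,m}$. Hence ``consumer $i$'s daily average rate equals $MCI_i$'' is literally the same statement as ``consumer $i$'s daily bill equals the RTP bill $\sum_{h=1}^{24}\lambda_h l_{i,h}$.'' For the ``if'' direction I would then start from a scheme under which $i$'s daily average rate is $MCI_i$, so that $i$'s daily bill is $MCI_i(l_{i,1},\dots,l_{i,24})\times\sum_{m=1}^{24}l_{i,m}$ — exactly the term netted out in the PPM welfare problem~\eqref{equ:consumerppoptimal}. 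Theorem~\ref{the:ppisoptimal} then applies verbatim: the scheme gives every consumer the same hourly first-order conditions~\eqref{equ:RTPfirstordercondition} as the RTPs, produces the same market equilibrium~\eqref{equ:RTPfirstorderconditionmarketeqbuilirium}, and is therefore economically optimal.

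For the ``only if'' direction I would represent a general retail scheme by the daily bill $P_i(l_{i,1},\dots,l_{i,24})$ it assigns to consumer $i$, so that $i$'s private optimization yields the first-order conditions $\partial U_i/\partial l_{i,h}=\partial P_i/\partial l_{i,h}$. Economic optimality forces these to coincide at the equilibrium allocation with the efficient conditions $\partial U_i/\partial l_{i,h}=\lambda_h$ of~\eqref{equ:RTPfirstordercondition}, hence $\partial P_i/\partial l_{i,h}=\lambda_h$ for every $h$ and therefore $P_i=\sum_{h=1}^{24}\lambda_h l_{i,h}$ up to an additive constant; budget balance, $\sum_i P_i=C(\overrightarrow{L})=\sum_{h=1}^{24}\lambda_h\sum_i l_{i,h}$, removes that constant. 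Dividing $P_i$ by $\sum_{h=1}^{24}l_{i,h}$ and invoking Eq.~\eqref{equ:MCIandRTP} gives daily average rate $=MCI_i$. The delicate point — the main obstacle — is justifying that the additive constant is genuinely zero rather than a consumer-specific lump sum, which by itself would not distort marginal incentives and hence would preserve efficiency in the narrow allocative sense. Here I must use that ``economically optimal'' in the sense of this paper is precisely the property characterized in Theorem~\ref{the:ppisoptimal}: the scheme must reproduce the RTP market equilibrium together with the RTP payments, so cross-subsidization is excluded and the constant must vanish.

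Finally, the ``therefore'' clause is immediate. If the implemented price is economically optimal, the equivalence just proved gives each consumer a daily average rate equal to their $MCI_i$; and if consumers $i$ and $k$ have linearly correlated daily demands, Theorem~\ref{the:clustersystemimpact} gives $MCI_i=MCI_k$, so they pay the same daily average rate regardless of whether $\parallel\overrightarrow{L_i}\parallel_1$ and $\parallel\overrightarrow{L_k}\parallel_1$ differ. This also shows that Corollary~\ref{cor:pprtpsamerate} is the special case of the present theorem obtained by taking the economically optimal scheme to be RTP or PPM.
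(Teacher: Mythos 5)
Your proposal is correct in outline and coincides with the paper on two of the three pieces: sufficiency is delegated to Theorem~\ref{the:ppisoptimal} exactly as the paper does, and the final clause is read off from Theorem~\ref{the:clustersystemimpact} exactly as the paper intends. Where you genuinely diverge is the necessity step. The paper keeps the unknown equal to the average rate $\gamma_i(l_{i,1},\dots,l_{i,24})$, writes the consumer's first-order conditions \eqref{equ:PPfirstorderconditionunderomega}, imposes $\partial U_i/\partial l_{i,h}=\lambda_h$, and then treats the resulting system $\frac{\partial\gamma_i}{\partial l_{i,h}}\sum_m l_{i,m}+\gamma_i=\lambda_h$ for $h=1,\dots,24$ as a first-order PDE for $\gamma_i$, invoking the Cauchy--Kovalevskaya theorem to assert a unique analytic solution and concluding $\gamma_i=MCI_i$. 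You instead integrate the equivalent gradient condition on the bill, $\partial P_i/\partial l_{i,h}=\lambda_h$, to obtain $P_i=\sum_h\lambda_h l_{i,h}+c_i$ and then argue that the constant vanishes. Your route is more elementary and has the virtue of making explicit the one-parameter family of solutions $\gamma_i=MCI_i+c_i/\sum_m l_{i,m}$ that the paper's Cauchy--Kovalevskaya appeal quietly suppresses (that theorem requires Cauchy data, which the paper never specifies, and near any interior bundle this family shows uniqueness fails unless one adds a side condition such as no fixed charge, or regularity of $\gamma_i$ as total demand tends to zero). Be aware, though, that your own closing of the gap is not airtight either: budget balance only yields $\sum_i c_i=0$, not $c_i=0$ for each $i$, and your fallback --- reading ``economically optimal'' as reproducing the RTP payments rather than merely the RTP allocation --- makes the necessity direction partly definitional. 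The paper shares this imprecision; a clean fix in your framework is to impose that the scheme has no lump-sum component (the bill vanishes at zero consumption), which kills each $c_i$ consumer by consumer and delivers $\gamma_i=MCI_i$ as claimed.
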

\begin{proof}
Theorem~\ref{the:ppisoptimal} has proved the sufficiency. Here, we prove the necessarily. 

We assume that $\Gamma$ is an economically optimal price scheme. The average daily rate of consumer $i$ is $\gamma_i$, which is a function of consumer $i$'s hourly demands and can differ between consumers. Consequently, consumer $i$ chooses their hourly demands under the price scheme $\Gamma$ by solving from the following welfare optimization problem.
\begin{align}
 \max_{(l_{i,h},\forall i)} U_i(l_{i,1},...,l_{i,24}) - \gamma_i(l_{i,1},...,l_{i,24})\sum_{m=1}^{24} l_{i,m} \label{equ:consumerppoptimalunderomega}
\end{align}
Then, consumer $i$'s hourly demand bundle is solved from the first order conditions.
\begin{align}
 \frac{\partial U_i(l_{i,1},...,l_{i,24})}{\partial l_{i,h}} = \frac{\partial \gamma_i}{\partial l_{i,h}} \sum_{m=1}^{24} l_{i,m}  + \gamma_i,\forall h. \label{equ:PPfirstorderconditionunderomega}
\end{align}

Because $\gamma$ is economically optimal, $\frac{\partial U_i(l_{i,1},...,l_{i,24})}{\partial l_{i,h}}$ must be equal to $\lambda_h$. Consequently, according to Cauchy-Kovalevskaya Theorem, the Eq.~\eqref{equ:PPfirstorderconditionunderomega} has a unique analytic solution near $(\lambda_1,...,\lambda_{24})$. Because $MCI_i$ is a solution of Eq.~\eqref{equ:PPfirstorderconditionunderomega}, $\gamma_i$ must be equal to $MCI_i$.
\end{proof}

\section{Criteria to assess the performance of an implemented retail price scheme in a DS}\label{sec:criteriaforprice}
The economic linkages connecting the consumer demand profile, the MCI, and the economically optimal price scheme are useful to assess the economic performance of an implemented retail price scheme in a DS. We can directly induce the most important corollary of this paper from Theorem~\ref{the:sameprofilesameoptimaldailyrate}, which allows us to develop a sequence of criteria to assess how different a current price scheme is from the economically optimal price scheme.
\begin{corollary}\label{cor:samepartition}
When the implemented price scheme in a DS is economically optimal, the partition of consumers according to their demand profile must be a refinement of the partition of consumers according to their daily average rate. If consumers with two demand profiles always have different MCIs, those two partitions must be the same.
\end{corollary}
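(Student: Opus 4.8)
The plan is to obtain both assertions as immediate consequences of the characterization in Theorem~\ref{the:sameprofilesameoptimaldailyrate}, used in tandem with Theorem~\ref{the:clustersystemimpact} (equivalently Lemma~\ref{lem:explaprofileclustering}). Theorem~\ref{the:sameprofilesameoptimaldailyrate} tells us that, under an economically optimal scheme, each consumer's daily average rate equals their $MCI$; hence the partition of consumers induced by daily average rates is precisely the partition induced by the map $i \mapsto MCI_i$. The whole argument then reduces to comparing this $MCI$-partition with the demand-profile partition.

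First I would prove the refinement claim. Take two consumers $i$ and $k$ in the same demand-profile cluster, i.e.\ $\overrightarrow{L_i}$ and $\overrightarrow{L_k}$ are linearly correlated. By Theorem~\ref{the:clustersystemimpact}, $MCI_i = MCI_k$. Since the implemented scheme is economically optimal, Theorem~\ref{the:sameprofilesameoptimaldailyrate} gives that $i$'s daily average rate equals $MCI_i$ and $k$'s equals $MCI_k$; these coincide, so $i$ and $k$ lie in the same daily-average-rate cluster. Thus every profile cluster is contained in a single rate cluster, which is exactly the statement that the profile partition refines the rate partition.

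Next I would upgrade this to an equality under the extra hypothesis. Suppose any two distinct demand profiles have different $MCI$ values. If $i$ and $k$ lie in different profile clusters, their profiles differ, so $MCI_i \neq MCI_k$, and by the same application of Theorem~\ref{the:sameprofilesameoptimaldailyrate} their daily average rates differ, placing them in different rate clusters. Hence no two distinct profile clusters can be merged into one rate cluster; combined with the refinement already established, the two partitions coincide.

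I do not anticipate a genuine obstacle: the corollary is essentially bookkeeping on top of the characterization in Theorem~\ref{the:sameprofilesameoptimaldailyrate}. The only point deserving care is the logical shape of ``refinement'': the first part needs the one-directional implication ``same profile cluster $\Rightarrow$ same rate cluster,'' and only the added hypothesis supplies the reverse implication ``different profile cluster $\Rightarrow$ different rate cluster'' needed to turn refinement into equality. For full rigor I would also remark at the outset that linear correlation is an equivalence relation on the positive demand vectors, so the profile partition is well defined, and that $MCI$ is constant on each such class by Theorem~\ref{the:clustersystemimpact}, which is precisely what lets the map $i \mapsto MCI_i$ descend to a well-defined labelling of profiles.
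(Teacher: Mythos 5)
Your proof is correct and follows essentially the same route as the paper's: it combines Theorem~\ref{the:clustersystemimpact} (same profile cluster implies equal MCI) with Theorem~\ref{the:sameprofilesameoptimaldailyrate} (optimality forces each daily average rate to equal the MCI) to obtain the refinement, exactly as the paper does via contradiction. You additionally spell out the converse direction that yields equality of the two partitions under the injectivity hypothesis, a step the paper's proof leaves implicit.
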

\begin{proof}
Assume consumer $i$ and $k$ belong to the same demand profile cluster $\omega_p$, we demonstrate that $i$ and $k$ also must belong to the same daily-average-rate cluster $\omega_r$ when the implemented retail price is economically optimal. Theorem~\ref{the:clustersystemimpact} demonstrates that
\[
i,k \in \omega_p \Rightarrow MCI_i = MCI_k
\]

If $i$ and $k$ pay different daily average rate $r_i \neq r_k$, we conclude either $r_i \neq MCI_i$ or $r_k \neq MCI_k$, which is contradictory with the assumption that the implmented price is economically optimal according to Theorem~\ref{the:sameprofilesameoptimaldailyrate}. Thus, $i$ and $k$ must pay the same daily average rate and belong to the same daily-average-rate cluster $\omega_r$
\end{proof}

Corollary~\ref{cor:samepartition} allows us assess the economic efficiency to a large extent even if the information about the RTPs is completely unknown.

When consumers with two demand profiles always have different MCIs, the mismatch of the partition of consumers according to demand profiles denoted by $\Omega_p$ and the partition of averagely daily rates $\Omega_r$ is a measurement of the efficiency of an implemented retail price scheme. The partition $\Omega_p$ must be identical to the partition $\Omega_r$. Otherwise, the implemented price must not be economically optimal. Simultaneously, the difference between $\Omega_p$ and $\Omega_r$ indicates how different is the implemented price scheme from an economically optimal price scheme. The larger the difference, the more inefficient the implemented price scheme.  

When consumer's MCI is injective function of their demand profile, we define the "the degree of consistency between the implemented price scheme and an economically optimal price design" as the difference between $\Omega_p$ and $\Omega_r$. In the DS discussed in above two sections, we assume that consumers are clustered into $T$ profile types and $S$ daily-average-rate types. We use $t$ to represent the $t$th type of demand-profile type $\omega_{p,t} \in \Omega_p$ and $s$ to represent the $s$th type of daily-average-rate type $\omega_{r,s} \in \Omega_r$.

\begin{definition}
The \emph{the degree of consistency} between the implemented price and an economically optimal price in the DS is
\begin{align}
DOC=&\frac{2 \times \sum_{t=1}^{T}\sum_{s=1}^{S} \rho_{ts}\log_2\frac{ \rho_{ts}}{\rho_{t}\rho_{s}}}{-\sum_{t=1}^{T} \rho_{t}\log_2\rho_{t} - \sum_{s=1}^{S} \rho_{s}\log_2\rho_{s}}, \nonumber\\ &\text{ if $ \rho_{t}\rho_{s} < 1$} \nonumber \\
= & 1, \text{otherwise} \label{equ:degreeofconfusion}
\end{align}
Here,
\[
\rho_t = \frac{|\omega_{p,t}|}{N}, \rho_s = \frac{|\omega_{r,s}|}{N}, \text{and  }\rho_{ts} = \frac{|\omega_{p,t} \cap \omega_{r,s}|}{N}.
\]
$|\omega_{p,t}|$ is the number of consumers having $t$th demand profile. $|\omega_{r,s}|$ is the number of consumers paying $s$th daily average rate. $|\omega_{p,t} \cap \omega_{r,s}|$ is the number of consumers that have $t$th demand profile and pay $s$th daily average rate.
\end{definition}
$DOC$ is the normalized mutual information between $\Omega_p$ and $\Omega_r$. $DOC$ measures how much the diversity of a consumer's daily average rate is reduced by when knowing this consumer's daily profile. When the implemented price scheme is economically optimal, $DOC$ is equal to $1$, which means that $\Omega_p$ and $\Omega_r$ are identical. $DOC$ is equal to $0$ if $\Omega_p$ and $\Omega_r$ are independent of each other, which is the worst case. The higher the value of $DOC$, the more efficient the implemented retail price scheme for the DS. 

It is rare that consumers with different profiles have the same MCI. Therefore, $DOC$ can be broadly used for various DSs and markets.

$DOC$ can measure the extent of efficiency of the implemented retail price even if we have no information about the system daily cost $C(\overrightarrow{L})$. When more information about $C(\overrightarrow{L})$ is revealed, further measurements can be established to assess the economic efficiency and impacts in more detail.

If we know that $C(\overrightarrow{L} )$ is monotonically increasing with a linear combination of $\overrightarrow{L}$'s $M$ features $\{\phi_j,j=1...M\}$, we can measure the economic distortion caused by the implemented retail price even if we do not know the exact form of $C(\overrightarrow{L})$. Assume that system cost $C(\overrightarrow{L} )$ is monotonically increased with $\sum_{j=1}^{M}\mu_j\phi_j$, where $\mu_j > 0$. Consumer $i$'s MCI is monotonically increased with $\Phi_i = \sum_{j=1}^{M}\mu_j MFI_{i,j}$. When a economically optimal price scheme is implemented, consumer $i$ pays a higher daily rate than consumer $j$ if and only if $\Phi_i > \Phi_j$. Therefore, we call $\Phi_i$ as the "MCI index". We can sort consumers in a list according to the ascending order of their MCI index. If this list is different from the list of consumers that is sorted in the ascending order of daily average rates $r_i$, then the implemented retail price is not economically optimal. The more different the two lists, the more distortion caused by the implemented retail price. We use the KenDall tau distance to measure the extent of distortion caused by an implemented retailed market price.
\begin{definition}
We define the \textit{distortion} caused by the implemented retail price
\begin{align}
Dt = \frac{|\{(\Phi_i,\Phi_j)|\Phi_i > \Phi_j \wedge r_i < r_j\}|}{N(N-1)/2} \times 100\%\label{equ:distortion}.
\end{align}
\end{definition}

If we have complete information about the function of $C(\overrightarrow{L} )$, such as the local marginal pices. we can accurately measure the inefficiency of the implemented retail price. Consumer $i$ is subsidized by the implemented price scheme if $MCI_i > \lambda_i$ and taxed if $MCI_i < \lambda_i$. The amount of money subsidized to consumer $i$ is $(MCI_i - \lambda_i)\times \sum_{h=1}^{24}l_{i,h}$.

\section{Empirical analysis on a Chinese distribution system}\label{sec:empirical}
\subsection{Data background and processing}
We use real smart meter data from an industrial county of one of the largest cities in China to analyze consumer marginal system impacts and assess the effectiveness of the current retail price scheme in this distribution system. In this city, there are $2,110$ consumers, all of whom are equipped with smart meters. The data ranges from January 2014 to December 2014 and consists of the electricity consumption of commercial, industrial and residential customers measured at 1-hour intervals. The total number of daily demand profiles is $516,494$. 

This county is representative of a Chinese distribution system. In contrast with the United States, China's electricity demand is mostly determined by large commercial and industrial consumers. $70$\% of China's electricity is consumed by industrial and commercial sectors\cite{LBNLChina2014}. In addition, most distribution systems in China usually simultaneously serve industrial, commercial, and residential consumers. In contrast with residential consumers, large industrial and commercial consumers are easier to be managed individually and priced differently. In the distribution system studied in this paper, the majority of consumers are industrial and commercial companies. But there are still a significant number of residential and public-service users. 

We normalized daily demand profiles and use an adaptive $k$-means algorithm to cluster all normalized daily demand profiles into 36 profile types using the elbow method [2]. In each cluster, the euclidean distance from a demand profile to the cluster kernel is less than $5$\% of the $l_2$ norm of the cluster kernel\cite{kodinariya2013review}\cite{morse2007efficient}.

We calculate every consumers' MFIs for the system-aggregated daily demand profile's morning ramp range, evening ramp range, and peak demand level, which are represented by $MFI_{MR}$, $MFI_{ER}$, and $MFI_{PD}$). China does not have a wholesale electricity market and has a fully-regulated wholesale price scheme. Consequently, the system cost is determined mainly by the three features we analyzed. According to Theorem~\ref{the:clustersystemimpact}, we only need to calculate each profile type's MFIs instead of calculating each person's MFIs. Therefore, the computing load is significantly reduced. 

\subsection{Consumers' marginal system impacts}
MFIs provide us the information about whose consumption were more responsible for the aggregate load profile's ramps or peak demands. For a given feature, consumers' MFIs are significantly various. In the Appendix, we present the histograms of $MFI_{MR}$, $MFI_{ER}$, and $MFI_{PD}$. 

Some consumers have negative $MFI_{MR}$ and $MFI_{ER}$ on particular days. Those consumers' consumption are very valuable in moderating aggregate load's morning or evening ramps on those days. If those consumers demand more electricity and keep their demand profiles, the aggregate demand will lead to more moderate morning or evening ramps. For instance, Type 6's demand was ramping down when the aggregate demand was experiencing the morning ramp on April 18th, 2014 (Fig.~\ref{fig:negativeMFI}). Consequently, Type 6 had a negative $MFI_{MR}$ in this day.
\begin{figure}
\centering
\subfigure[System aggregated demand profile]{
\includegraphics[width = 0.3\textwidth,height = 1.8in]{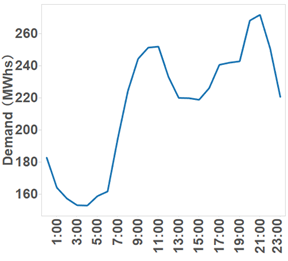}
\label{fig:aggregatedemand}
}
\subfigure[Aggregated demand profile of Type-6 consumers]{
\includegraphics[width = 0.3\textwidth,height = 1.8in]{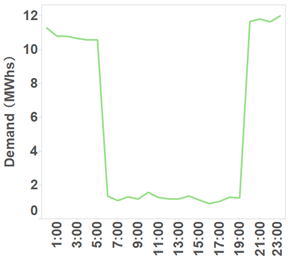}
\label{fig:type6}
}
\caption{Type-6 consumer's $MFI_{MR}$ is negative on April 18th, 2014}
\label{fig:negativeMFI}
\end{figure}

Consumer demands affect the distribution system's cost through different dynamics. The costs serving some consumers are mainly caused by dealing with their morning ramps while the cost serving other consumers are mainly due to their impacts on system's evening ramps or peak demands. In Fig. \ref{fig:triangle}, we plot the MFIs of two types of consumers on March 24th. System costs for dealing with the morning ramps are more due to Type 16's rather than Type 20's consumption. System costs for dealing with the evening ramps and peak loads are more due to Type 20's rather than Type 16's consumption.
\begin{figure}
\centering
\includegraphics[width = 0.33\textwidth]{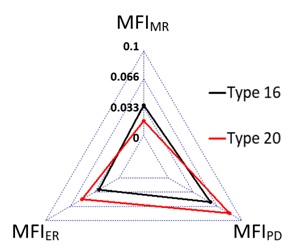}
\caption{Type-16 and Type-20 consumer's MFIs for three features}
\label{fig:triangle}
\end{figure}

Understanding the long-term distributions of a consumer's MFIs is necessary to examine how expensive it is to serve this consumer and where the expense comes from. The long-term distributions of consumers are also useful for planning a distribution system's facilities and assets, such as the size of transformers. We summarize the means and standard deviations of every type of consumer MFIs for three features in Fig. ~\ref{fig:boxplot}. Serving some consumers is less expensive on average than serving others. For example, Type 1 consumers have lower average MFIs for all three features than Type 3 consumers, so serving Type 1 is less expensive than serving Type 3.
\begin{figure}
\centering
\subfigure[Distributions of $MFI_{MR}$ in each profile type]{
\includegraphics[width = 0.45\textwidth,height = 2in]{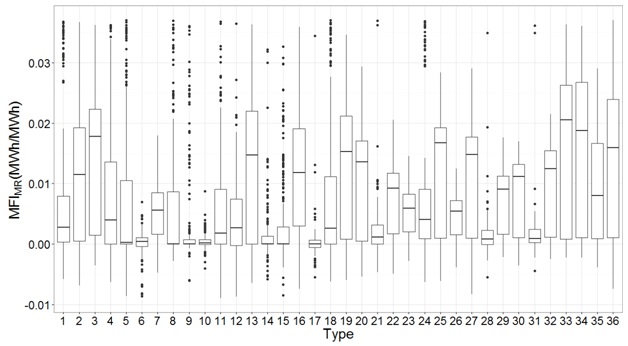}
\label{fig:MFIMRbox}
}
\subfigure[Distributions of $MFI_{ER}$ in each profile type]{
\includegraphics[width = 0.45\textwidth,height = 2in]{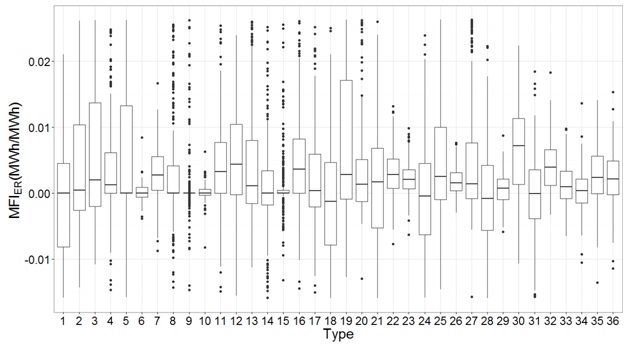}
\label{fig:MFIERbox}
}
\subfigure[Distributions of $MFI_{PD}$ in each profile type]{
\includegraphics[width = 0.45\textwidth,height = 2in]{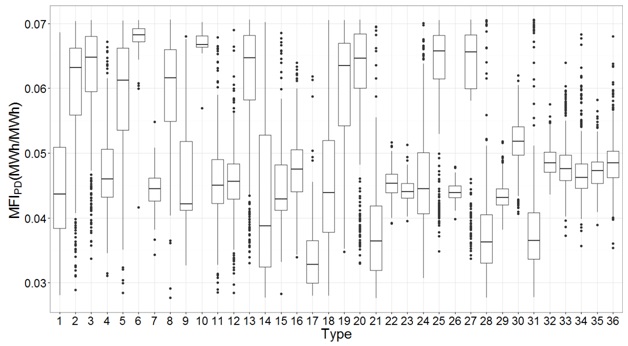}
\label{fig:MFIPDbox}
}
\caption{Variations of three MFIs within and between consumer profile types}
\label{fig:boxplot}
\end{figure}

Many consumer-profile types have MFIs with significant fluctuations. For example, the mean of Type 3's $MFI_{MR}$ is nearly 10 times larger than its $25$\% quantile value. $MFI_{PD}$ has smaller within-type variation than $MFI_{MR}$ and $MFI_{ER}$. The heterogeneities of $MFI_{MR}$ and $MFI_{ER}$ are mainly within types rather than between types. In contrast, $MFI_{PD}$ has more significant cross-type heterogeneity than $MFI_{MR}$ and $MFI_{ER}$. The significant within-type variation of MFIs is caused by the big difference between daily aggregated demand levels and profiles. If two days have different aggregated demand levels or profiles, the same profile type has different MFIs of the same given feature in these two days. 

We also note that the variation of consumer MFIs are positively correlated with the absolute value of the mean of their MFIs. Consumers that have significantly higher or lower average marginal system impacts usually also have significantly varying system impacts. In the Appendix, we provide the scatter plots of average MFIs and the associated standard deviations. 

We note that nearly all profile types have both negative and extreme high positive $MFI_{MR}$ and $MFI_{ER}$. Therefore, every type of consumers can be valuable in moderating system ramps in some "particular days" while their demands significantly aggravate system ramps in some other days. For example, Type 33 had the highest average $MFI_{MR}$. However, this type still had negative $MFI_{MR}$ in $5$\% of days.

\subsection{Consumer demand level and marginal system impacts}
Besides implementing "demand-level based" prices, China's energy management has a long-term principle that is "manage large consumers only". Consequently, most demand-response projects in China are designed to target only large consumers\cite{wang2010demand}\cite{hu2007implementation}\cite{zhou2010overview}\cite{andrews2009china}. However, we have demonstrated that a consumer's MCI is determined by their demand profile rather than demand level. Consumers' MCIs are not necessarily positively correlated with their demand levels. Thus, "demand-level based" prices or "manage large consumers only" principle are efficient in reducing system cost only if consumers' MCIs happen to be positively correlated with their demand levels in a DS.

In our studied distribution system, consumers' MCIs are not positively correlated with their demand levels. In Fig.~\ref{fig:MFIMRvsUsage} , we plot the correlation between consumers' demand levels and their $MFI_{MR}$. Many large consumers have smaller $MFI_{MR}$ than small consumers. We note that $MFI_{MR}$'s variations significantly increase when consumers' demand levels exceed a certain level. The correlations between consumers' demand levels and their $MFI_{ER}$ and $MFI_{DR}$ have the similar behaviors as shown in the Appendix. Therefore, it is highly inefficient to implement "demand-level based" policies or "manage large consumers only" principle in this DS.

\begin{figure}
\centering
\includegraphics[width = 0.45\textwidth,height = 2in]{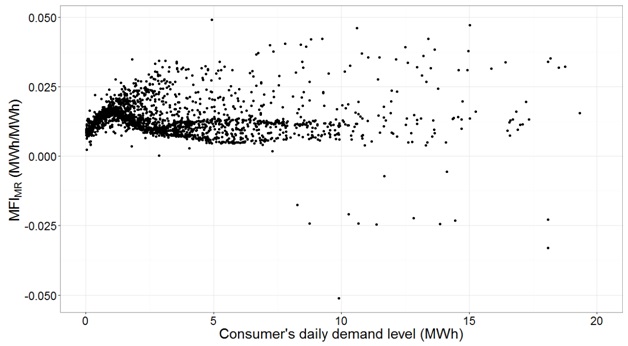}
\caption{Correlation between consumer's $MFI_{MR}$ and demand level}
\label{fig:MFIMRvsUsage}
\end{figure}

\subsection{Assess the efficiency of the implemented retail-price scheme}
The price scheme in this distribution system is complicated. We summarize the price scheme in the Appendix. In general, commercial and industrial consumers were charged a time-of-use tied price. Residential and public-service consumers had a flat rate.

We do not have detailed information about the daily cost structure of this distribution system because China does not have a wholesale market. Therefore, we use the $DOC$ index defined in Sec. ~\ref{sec:criteriaforprice} to examine the efficiency of the current retail price scheme. 

We calculate each consumer's daily average rates. Because average rates are distributed continuously between $0.35$ Yuan/kWh to $0.87$ Yuan/kWh, we equally split $[0.35,0.87]$ to 36 segments, which is the number of consumers' profile types, and cluster consumers into the same average-daily-rate type if their average-daily rates are in the same segment. By comparing the two clusters, We calculate the daily $DOC$ index of the current retail price scheme and summarize the monthly average of each weekday in Fig.~\ref{fig:DOC}. 
\begin{figure}
\centering
\subfigure[Daily DOC of the current retail price]{
\includegraphics[width = 0.4\textwidth]{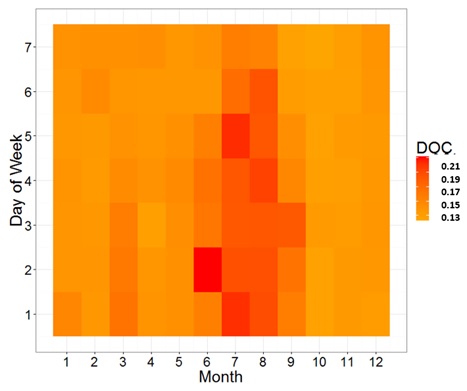}
\label{fig:DOC}
}
\subfigure[Daily Dt of the current retail price]{
\includegraphics[width = 0.4\textwidth]{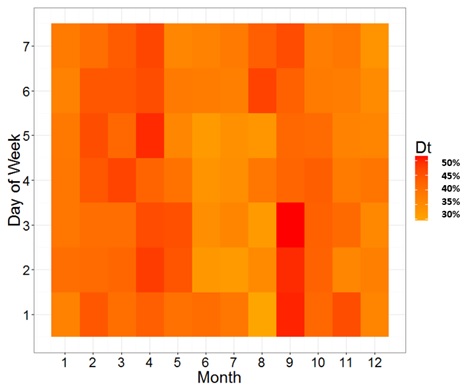}
\label{fig:Dt}
}
\label{heatmap}
\caption{Economic inefficiency and distortion of the current retail price}
\end{figure}

The retail price scheme is very different from an economically optimal price. The highest daily $DOC$ index in these 11 months is only $0.3825$, which is far less than $1$. Therefore, the randomness in daily average rates can only limitedly be reduced if we know a consumer's daily profile on a particular day. We can conclude that the current retail price scheme is really inefficient even though we do not know this DS's daily cost structures.

The extent of retail price inefficiency varies by month. As shown in Fig.~\ref{fig:DOC}, the retail price scheme was most efficient during weekdays in summer. 

We assume the daily system cost is monotonically increasing with $MFI_{MR}+MFI_{ER}+MFI_{PD}$ and calculate the daily $Dt$ index. Under our assumptions, the current retail price scheme created significant distortion in this distribution system. $40$\% to $60$\% consumers pay a higher rate than their neighbors whose $MCIs$ are higher.

\section{Conclusion and discussion: manage consumers according to their marginal system impacts}\label{sec:conclusion}
In this paper, we explain how a distribution system's daily cost and associated emissions are affected by consumer daily demand profiles. We clarify that a consumer's marginal system impacts are determined by their demand profile rather than their demand level. Thus, profile-based consumer clustering clusters consumers according to their marginal system impacts. We also demonstrate that a consumer's daily marginal system cost impact is the economically optimal daily average rate. If we design a profile price scheme whose rate is equal to consumer's daily marginal system cost impact, the profile price is equivalent to real-time pricing and is an economically optimal price scheme. 

We indicate that clustering demand profile is identical to cluster consumers according to their marginal system impacts. When consumers with various demand profiles have different marginal impacts on the system daily cost, demand profile clustering is identical to daily-average-rate clustering only if the implemented price is economically optimal, . Based on these theoretical analyses, we develop a criteria system to evaluate the economic efficiency of an implemented retail price scheme in a distribution system. Our criteria system can examine the extent of a retail price scheme's efficiency even if we do not have information about the distribution system's daily cost structure. These criteria can also be used to target consumers who overpay or underpay for their electricity usage. 

We analyze data from a real distribution system in China and examine consumer marginal system impacts and the efficiency of the retail price scheme there. The empirical results deepen our insights and understanding about the multi-dimensional marginal system impacts of consumers. The results strongly suggest that we redesign current retail price schemes and rethink the principles used to direct demand-response project designs.

In general, the design for retail price or demand-response projects needs to be smarter by considering the following three issues.

First, it is very valuable for a distribution system operator or utility to accurately target and manage "particular days" in which a profile type's MFIs are extremely low or high. Fig. ~\ref{fig:boxplot} demonstrates that there are a small number of "particular days" for each profile type. Thus, every consumer will be affected in very few days if system operators or utilities target and manage "particular days". However, in these small number of "particular days", it is either very valuable or extremely expensive to serve the corresponding type of consumers. Therefore, distribution systems can significantly improve economic and environmental performance by developing demand management projects for "particular days" for each profile type.

Second, the large within-type variations of MFIs remind us that demand management, including retail pricing design, must make a trade-off between long-term mean and variation of consumer's system impacts. In the studied distribution system in our research, consumers that have small MFIs usually have stable MFIs over days. For these consumers, their retail rate can be stable in the whole year. In contrast, complicated designs for retail price rates or demand-response projects are necessary to manage consumers whose MFIs are large and significantly varying.

Finally and most importantly, the "marginal system impact based" principle should replace the "demand level based" principle in order to optimize demand management. Consumers' "marginal system impacts" reflects which consumption behaviors are expensive to serve and need to be the foundation of retail price designs.

\bibliographystyle{IEEEtran}
\bibliography{PESLargeUserVersion}

\end{document}